\newtheorem{lemma}{Lemma}
\journal{Mathematics and Computers in Simulation}
\begin{document}

\begin{frontmatter}



\title{Closed-form solutions for VIX derivatives in a Legendre empirical model}


\author[math]{Yingli Wang\corref{cor1}}
\ead{2022310119@163.sufe.edu.cn}

\author[math]{Chenglong Xu\corref{cor2}}
\ead{xu.chenglong@shufe.edu.cn}

\author[math]{Ping He\corref{cor3}}
\ead{pinghe@mail.shufe.edu.cn}

\cortext[cor2]{Corresponding author}

\affiliation[math]{
  organization={School of Mathematics, Shanghai University of Finance and Economics},
  city={Shanghai},
  postcode={200433},
  country={China}
}
\begin{abstract}
  In this paper, we introduce a data-driven, single-parameter Markov diffusion model for the VIX. The volatility factor evolves in $(-1,1)$ with a uniform invariant distribution ensured by Legendre polynomials, mapped to the empirical distribution. We derive analytical series solutions for VIX futures and options using separation of variables to solve the Feynman-Kac PDE. Compared to the 3/2 model, our approach offers equal or superior accuracy and flexibility, providing an efficient, robust alternative for VIX pricing and risk management. Code and data are available at \url{github.com/gagawjbytw/empirical-VIX}.
\end{abstract}

\begin{keyword}
  the VIX index \sep closed-form solutions \sep numerical simulation \sep computational finance \sep mathematical modeling
  
  \MSC[2020] 91G20 \sep 60J25 \sep 65C30
  
  \end{keyword}

\end{frontmatter}



\section{Introduction}
Over the past thirty years, variance products and volatility trading have gained popularity with the introduction of the VIX index and its derivatives. The VIX is a volatility index calculated by the Chicago Board Options Exchange (CBOE), which informs investors about the expected market volatility of the S\&P 500 index in the next 30 calendar days. \citet{whaley2009understanding} provides a comprehensive explanation of the VIX index and its role in financial markets as an important barometer of investor sentiment. \citet{wang2019vix} shows that VIX has significant predictive power for international stock market volatility, with large VIX movements having particularly strong explanatory ability for market volatility forecasting. Furthermore, \citet{bekaert2014vix} decompose the squared VIX index into the conditional variance of stock returns and the equity variance premium, demonstrating that while the variance premium predicts stock returns, the conditional stock market variance is more effective at predicting economic activity and financial instability.

The modeling and simulation of volatility dynamics represent a challenging mathematical problem that requires sophisticated computational techniques. Our research aligns with the scope of Mathematics and Computers in Simulation by presenting a novel approach to simulating market dynamics through innovative mathematical modeling. The computational aspects of our work, particularly the development of efficient algorithms for pricing derivatives based on spectral methods, offer valuable insights into scientific computation in financial mathematics.

The VIX expresses volatility in percentage points and is derived under the assumption of zero jump on the process of the forward price ${F_{t,T},0\le t\le T}$ to the S\&P 500 index. Under a risk-neutral measure $\mathbb{Q}$, the dynamics of $F_{t,T}$ are governed by the equation 
\[
  \frac{dF_{t,T}}{F_{t,T}}=\sigma_{t,T}dW_t,
\]
where $\sigma_{t,T}$ is the stochastic volatility and ${W_t}$ is the standard $\mathbb{Q}$-Brownian motion. The square of the VIX index can be defined as 
\[
  {\rm VIX}_{t}^2:=\mathbb{E}^\mathbb{Q}\left[\frac1{T-t}\int_t^{T}\sigma_{s,T}^2ds\right].
\]
For further information on the VIX, we refer the reader to \citet{guoyuquan}.

With the growing importance of volatility trading, VIX derivatives have become increasingly important instruments for hedging and speculation. \citet{zhang2006vix} examine VIX futures contracts, proposing a stochastic variance model for VIX evolution and developing pricing expressions. Their empirical findings suggest that model calibration using recent data significantly improves pricing accuracy. Additionally, \citet{cheng2019vix} investigates the volatility premium embedded in VIX futures (the VIX premium), revealing a puzzling pattern where ex ante premiums fall or remain flat when risk measures increase, despite reliably predicting ex post returns to VIX futures.

Many scholars have focused on research related to the stochastic volatility models (SVM). The Heston model introduced in \citet{Heston} is a well-known model that assumes volatility dynamics to be a Cox-Ingersoll-Ross (CIR) process, which guarantees positivity and the mean-reverting property. The Bergomi model for future variance is introduced in \citet{Bergomi}. Additionally, there has been significant research on joint models for SPX and VIX options. The 3/2 model is a new model designed to fit both markets, which was first studied in \citet{Heston3to2} and has been analyzed by \citet{BB}, \citet{DF}. This model is a popular choice because it can reproduce the increasing right-hand implied-volatility skew in VIX options.

Our approach draws inspiration from mathematical modeling of dynamical systems, particularly the concept of equilibrium distributions and ergodicity. Just as dynamical systems tend to evolve towards their equilibrium states characterized by specific probability distributions, we propose that financial markets, specifically the VIX index, exhibit similar behavior. The bridge between mathematical modeling and computational simulation has been increasingly recognized in the literature. \citet{poitras2015classical} review the etymology and history of the classical ergodicity hypothesis and establish its connection to the fundamental empirical problem of using nonexperimental data to verify theoretical propositions in financial mathematics. In our model, we treat the VIX as a complex system whose dynamics can be described by methods from stochastic differential equations and numerical analysis, where the empirical distribution serves as the equilibrium state. This perspective allows us to leverage powerful computational tools, such as spectral methods based on orthogonal polynomials, to develop a more robust and computationally efficient approach to financial modeling.

The fast mean-reverting characteristics of volatility have been extensively studied in the literature. \citet{fouque1999financial} exploit the observed 'bursty' or persistent nature of stock price volatility, noting that volatility reverts slowly to its mean at high frequency but exhibits fast mean-reversion over the time scale of derivative contracts. Their asymptotic analysis yields simplified pricing and implied volatility formulas that effectively 'fit the skew' from European index options. Further extending this approach, \citet{cozma2020simulation} study systems of stochastic processes with fast mean-reverting volatilities, where coefficients containing fast volatility are replaced by ergodic averages—essentially applying a law of large numbers principle. \citet{hambly2020fast} develop this concept for large portfolio models, showing that in a fast mean-reverting volatility environment, an approximate constant volatility model can accurately estimate the distribution of losses. Additionally, \citet{dragulescu2002probability} provide analytical solutions for the time-dependent probability distribution of stock price changes under the Heston stochastic volatility model, demonstrating excellent agreement with empirical data across multiple time scales.

Regarding the valuation of VIX derivatives, \citet{mencia2013valuation} conduct an extensive empirical analysis of various pricing models before, during, and after the 2008-2009 financial crisis. Their findings suggest that a process for the log of the observed VIX combining central tendency and stochastic volatility reliably prices VIX derivatives, while also revealing a significant risk premium that affects the long-run volatility level. In a complementary study, \citet{zhu2012analytical} present a closed-form, exact solution for pricing VIX futures in a stochastic volatility model with simultaneous jumps in asset price and volatility processes, demonstrating that the Heston stochastic volatility model performs well for VIX futures pricing.

In actual operation, multi-parameter calibration is sometimes a difficult thing. When optimizing, there are often no solutions or unstable situations. For example, the well-known 3/2 model, the calibration usually involves three parameters and it is usually difficult to calibrate, for example, \citet{gudmundsson2019calibration}.

In this paper, we propose a new model that relaxes the constant reverting assumption and instead assumes that the volatility process converges to its invariant distribution. In the theory of Markov processes, a Markov process that has an invariant distribution will converge to that distribution under certain conditions. By assuming that the VIX process is a Markov process and that its empirical distribution is close to its invariant distribution, our model can capture the dynamics of the VIX index more accurately than traditional models. This idea is inspired by the Metropolis-Hastings algorithm, which generates a sequence of samples from a target distribution by constructing a Markov chain whose stationary distribution is the target distribution. If the proposal distribution is designed appropriately, the Metropolis-Hastings algorithm can converge to the desired invariant distribution. Another famous statistical method that uses this idea is the Gibbs sampler, which is a special case of the Metropolis-Hastings algorithm where the proposal distribution is replaced with the conditional distributions of each variable given the other variables. For more information on the Metropolis-Hastings algorithm and the Gibbs sampler, we refer the reader to \citet{SAGX}. 

The question we face now is how to construct a continuous Markov process that closely matches the empirical distribution of the VIX index. In this paper, we propose using Legendre polynomials as eigenfunctions of the generator for a Markovian diffusion process ${X_t}$. We have identified a model whose invariant distribution closely approximates the empirical distribution of the VIX. Thanks to a spectral expansion of the pricing formula based on Legendre polynomials, calculations are both accurate and efficient. Moreover, this model requires the calibration of only a single parameter, $\kappa$. We show that if the invariant distribution of the process is a uniform distribution on $[-1,1]$, which can be achieved with Legendre polynomials, then we can accurately model the VIX index based on its historical empirical distribution.

\section{Motivation and modeling methods}
 We analyze daily data from 1990/01/02 to 2024/12/30 and construct a continuous-time Markov process $h(X_t)$ to model the VIX index. Here, $(X_t)_{t\ge0}$ is a diffusion process with an invariant distribution of $U[-1,1]$, and $h$ is a function that ensures the distribution of $h(2Y-1)$ matches the empirical distribution of the data, where $Y$ has a distribution of $U[0,1]$. We observe that if $h$ is a monotone increasing function, it is equivalent to the inverse function of the empirical CDF. Finally, we apply dynamic analyses to investigate the pricing of the corresponding derivatives and present numerical solutions of the pricing problem using the method of separation of variables.

In this paper, we assume that the volatility is determined by the factor process $(X_t)_{t\ge0}$, which is a diffusion with double-side barriers given by
\begin{equation}\label{doublebarrier}
dX_t=-\kappa X_tdt+\sqrt{\kappa(1-X_t^2)}dW_t,
\end{equation}
where $\kappa>0$ (can be calibrated from the VIX option data) and $X_0\in(-1,1)$, $(W_t)_{t\ge0}$ is a Brownian motion under risk-neutral probability measure. The process $(X_t)_{t\ge0}$ can only take values in the interval $(-1,1)$. We can write the generator of $(X_t)_{t\ge0}$ as
\[ 
  \mathcal{L}=-\kappa x\frac{\partial}{\partial x}+\frac{\kappa}{2}(1-x^2)\frac{\partial^2}{\partial x^2}.
\]
\begin{lemma}\label{lemma1}
The process $(X_t)_{t\ge0}$ determined by \eqref{doublebarrier} has a unique invariant distribution $U[-1,1]$.\\

\end{lemma}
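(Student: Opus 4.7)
The plan is to verify the invariance of $U[-1,1]$ by a direct check of the stationary Fokker--Planck equation, and then to establish uniqueness via the Feller boundary classification of the one-dimensional diffusion on $(-1,1)$.

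For existence, I would first write down the stationary Fokker--Planck (forward Kolmogorov) equation associated with \eqref{doublebarrier}:
\[
  0 = -\frac{\partial}{\partial x}\bigl[-kx\,p(x)\bigr] + \frac{1}{2}\frac{\partial^2}{\partial x^2}\bigl[k(1-x^2)\,p(x)\bigr].
\]
Then I would substitute the candidate density $p(x)=\tfrac12\mathbf{1}_{[-1,1]}(x)$ and verify that the two terms produce $+k/2$ and $-k/2$ respectively, which cancel. This shows $U[-1,1]$ is indeed an invariant distribution. A more conceptual verification that I would include in parallel uses the speed measure: setting $\mu(x)=-kx$ and $\sigma^2(x)=k(1-x^2)$, a short computation gives $2\mu/\sigma^2 = -2x/(1-x^2) = (\ln(1-x^2))'$, so the scale density is $s'(x)=1/(1-x^2)$ and the speed density is $m(x) = 1/(s'(x)\sigma^2(x)) = 1/k$, i.e., constant on $(-1,1)$. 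The normalized speed measure is therefore exactly $U[-1,1]$.

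For uniqueness, the key step is to show that both boundaries $\pm 1$ are natural (inaccessible) in Feller's classification. I would compute the Feller integrals of the scale function $s(x)=\tfrac12\ln\frac{1+x}{1-x}$, which diverges at both endpoints, showing the scale measure of any neighborhood of $\pm 1$ is infinite, whereas the speed measure $dm = dx/k$ is finite. Combined, these give that $\pm 1$ are natural boundaries and the process remains in $(-1,1)$ for all $t\ge 0$ almost surely, starting from any $X_0 \in (-1,1)$. In particular the diffusion is regular and positively recurrent on $(-1,1)$, and from the standard one-dimensional diffusion theory the invariant distribution of a regular positively recurrent diffusion is unique and given by the normalized speed measure; by the computation above, this is $U[-1,1]$.

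The main obstacle I anticipate is the boundary analysis: one must be careful that the degeneracy of the diffusion coefficient at $x=\pm 1$ does not cause the process to be absorbed or exit the interval in finite time. This is exactly what the divergence of the scale function at $\pm 1$ rules out, so once that integral is evaluated the rest follows from standard results on one-dimensional diffusions. The Fokker--Planck calculation is routine; the conceptual content of the lemma lies in the Feller classification of $\pm 1$ as natural boundaries.
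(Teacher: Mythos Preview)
Your argument is sound and in fact more complete than the paper's, but it follows a genuinely different route. The paper's proof is short and purely ODE-based: it notes that $\mathcal{L}$ is formally self-adjoint, invokes a Lyapunov-type existence criterion (taking $u\equiv 1$), and then solves $\mathcal{L}g=0$ directly, observing that the general solution is $g(x)=a+c\log\frac{1+x}{1-x}$ and that any nonzero $c$ forces $g$ to change sign, so the only probability-density solution is $g\equiv\tfrac12$. Your approach instead goes through the scale/speed formalism and the Feller boundary classification; this buys you an honest proof of ergodicity (positive recurrence and hence uniqueness of the invariant measure), whereas the paper's argument really only establishes uniqueness among smooth densities solving the stationary equation.

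One small correction: with $s'(x)=1/(1-x^2)$ divergent and $m(x)=1/k$ finite near $\pm 1$, the Feller classification makes $\pm 1$ \emph{entrance} boundaries, not natural ones (check $N(\pm 1)=\int M[x,\pm 1)\,dS(x)<\infty$ while $\Sigma(\pm 1)=\infty$). This does not affect your conclusion---entrance boundaries are still inaccessible from the interior, so the diffusion is confined to $(-1,1)$, is regular and positively recurrent (finite total speed measure), and has the normalized speed measure $U[-1,1]$ as its unique invariant distribution---but you should state the classification correctly.
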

\begin{proof}
It is easy to see that $\mathcal{L}$ is a self-adjoint operator. \citet{Hairer} give the criterion for the existence of the invariant probability measure, whether there is a $u\ge0$ such that $\mathcal{L}u\le C_1-C_2u$ for some constants $C_1,C_2>0$. Let us try $u=1$, and $C_2=\frac12C_1>0$, then $\mathcal{L}u+C_2u=C_2\le C_1$.

Now we can give the invariant distribution as follows. The forward equation of the process is 
\[
 \frac{\partial f}{\partial t}=-\kappa x\frac{\partial f}{\partial x}+\frac{\kappa}{2}(1-x^2)\frac{\partial^2f}{\partial x^2}.
\]
Set $g(x)$ the probability density function of the invariant distribution, we obtain 
\[
  -\kappa xg'+\frac{\kappa}{2}(1-x^2)g''=0.
\]
Obviously, any constant function is a solution to the equation. Since $g(x)$ is a probability density function on $[-1,1]$, so $g(x)=1/2$. We claim that the non-constant solution can't be the pdf of the invariant distribution. In fact, the non-constant solution can be given by $g(x)=c\log\frac{1+x}{1-x}$, where $c$ is a constant. However, it may takes negative values, thus it can't be a probability density function.
\end{proof}
Assume that $h$ is a function lies in $[0,1]$, we can establish a model from the VIX factor $X=(X_t)_{t\ge0}$, i.e., there is a function $h$ such that 
\[
  {\rm VIX}_t:=h\left(\frac12(X_t+1)\right).
\]
We try to find the appropriate function $h$ to describe the relationship between $(X_t)_{t\ge0}$ and $({\rm VIX}_t)_{t\ge0}$. Given $Y\sim U[0,1]$, the problem comes down to find the function $h$ such that the distribution of $h(Y)$ is close to the empirical distribution based on the historical data of VIX. Let $F$ be the cumulative distribution function of the empirical distribution based on the historical data of VIX. Suppose $h$ is monotone increasing and $Z\sim U[-1,1]$, then for a given real number $c$, 
\[ 
  F(c)=\mathbb{P}\left(h\left(\frac12(Z+1)\right)\le c\right)=\mathbb{P}\left(\frac12(Z+1)\le h^{-1}(c)\right)=h^{-1}(c).
\]
So $h=F^{-1}$ is exactly the function we need. Intuitively, the parameter $\kappa$ determines the speed at which the VIX reverts to its stationary distribution.

\begin{figure}[H]
  \centering
  \begin{minipage}{0.32\textwidth}
    \centering
    \includegraphics[width=\textwidth]{./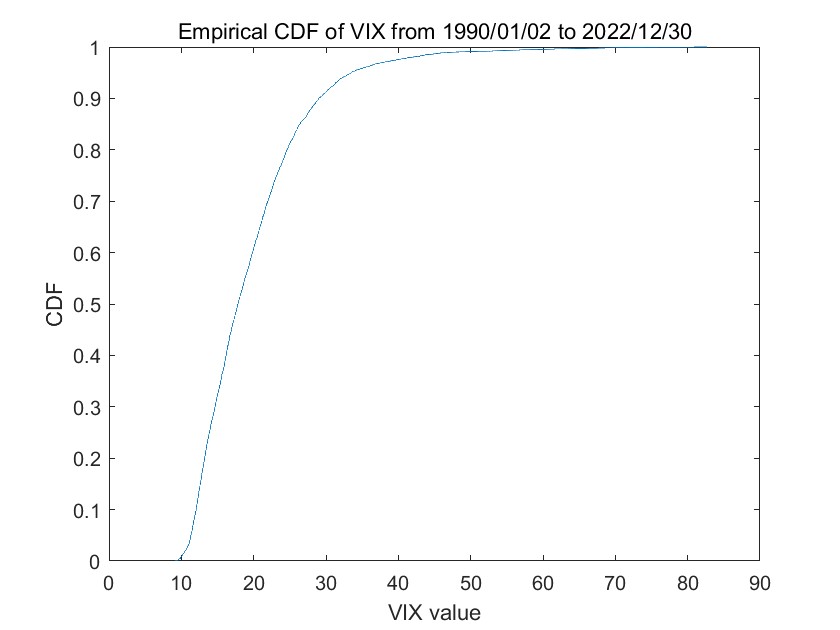}
    \caption{Empirical CDF of VIX index.}
    \label{fig:empiricalCDF}
  \end{minipage}
  \hfill
  \begin{minipage}{0.32\textwidth}
    \centering
    \includegraphics[width=\textwidth]{./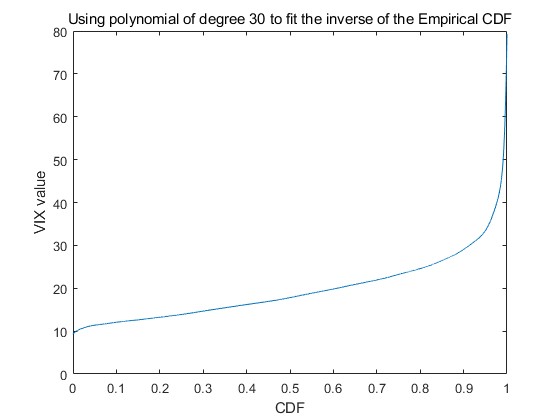}
    \caption{The function $h$ fit by 30 degree polynomial.}
    \label{fig:degree30}
  \end{minipage}
  \hfill
  \begin{minipage}{0.32\textwidth}
    \centering
    \includegraphics[width=\textwidth]{./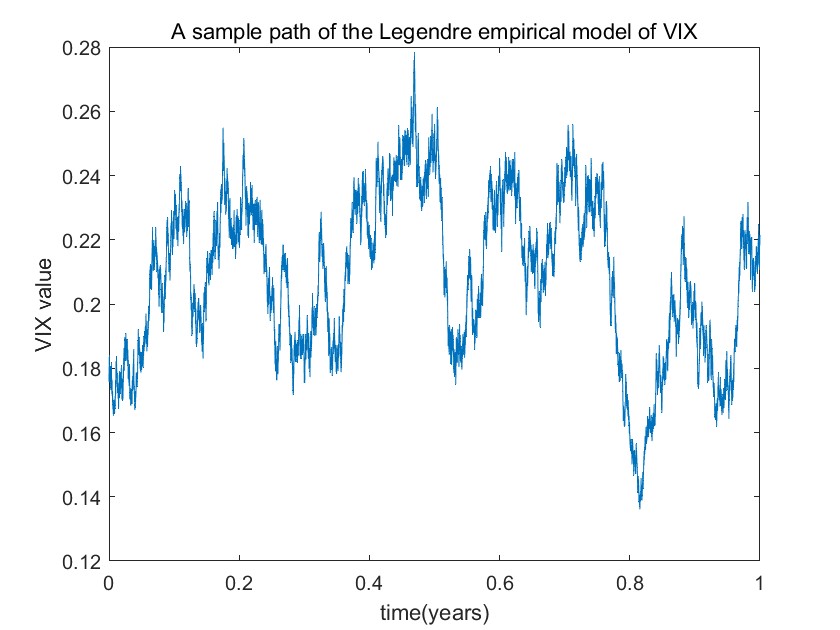}
    \caption{A sample path of the empirical model of VIX, where $\kappa=2.362$.}
    \label{fig:samplepath}
  \end{minipage}
\end{figure}

\section{Pricing of the corresponding VIX derivatives}
Now, let's shift our attention to the pricing of VIX derivatives. Specifically, we need to determine a reasonable price for VIX futures and VIX call/put options given the current value of the VIX index. To achieve this, we need to obtain the value of $X_t$ from the value of ${\rm VIX}_t$. To do so, we can use polynomial fitting and solve the resulting polynomial equations. 

In this section, we consider the pricing of the corresponding VIX derivatives. Let $p(t,x,y)$ denote the transition probability density of $(X_t)_{t\ge0}$, which satisfies
\begin{equation}\label{forwardequation}
  \begin{cases}
    \frac{\partial p}{\partial t}=\frac12\kappa(1-y^2)\frac{\partial^2p}{\partial y^2}-\kappa y\frac{\partial p}{\partial y},\ t\in[0,T),y\in(-1,1)\\
    p(0,x,y)=\delta(y-x).
  \end{cases}
\end{equation}
Then we can easily give the pricing formulas.
\subsection{Pricing VIX futures and call/put options}
The price of VIX futures is given by 
\begin{equation}\label{VIXfutures}
  F(t,x)
  :=\mathbb{E}[{\rm VIX}_T|X_t=x]
  =\int_{-1}^1h(\frac12(y+1))p(T-t,x,y)dy
  =\mathbb{E}^x\left[h(\frac12(X_{T-t}+1))\right]\ ,t\in[0,T],
\end{equation}
where $\mathbb{E}^x$ represents the expectation under the probability measure $\mathbb{P}(\cdot|X_0=x)$. VIX options settle by cash and trade in European style. Due to the Markov property of our model, European call options maturing at $T$ with strike price $K$ are given by
\begin{align*}
c(t,x) 
= e^{-r(T-t)}\mathbb{E}[({\rm VIX}_T-K)^+|X_t=x]
= e^{-r(T-t)}\mathbb{E}^x[(h(\tfrac12(1+X_{T-t}))-K)^+].
\end{align*}
We aim to use the method of separation of variables to obtain series solutions for pricing VIX futures and call options. First, we consider VIX futures. Using the Feynman-Kac formula, price of VIX future $F(t,x)$ in \eqref{VIXfutures} satisfies the Cauchy problem
$$
\begin{cases}
\frac{\partial F}{\partial t}=\kappa x\frac{\partial F}{\partial x}-\frac{\kappa}{2}(1-x^2)\frac{\partial^2F}{\partial x^2}, & t\in[0,T],\\
F(T,x)=h(\tfrac12(x+1)).
\end{cases}
$$
By the method of separation of variables, we set $F(t,x)=X(x)T(t)$, which gives
$$
X(x)T'(t)=\kappa xX'(x)T(t)-\frac{\kappa}{2}(1-x^2)X''(x)T(t).
$$
Utilizing the method of separation of variables by setting $F (t, x)=X(x)T(t)$, we have:

$$\frac{T'(t)}{\kappa T(t)}=\frac{xX'(x)-\frac12(1-x^2)X''(x)}{X(x)}=\lambda,$$
where $\lambda = \frac 12 n(n+1)$. The equation about $X(x)$ is the Legendre equation, which has solutions in the form of Legendre polynomials of order $n$, i.e., $X(x)=P_n(x)$. The equation about $T(t)$ is a simple ODE with solution:
\[
  T(t)=c_n\exp\left(\frac{\kappa n(n+1)t}{2}\right).
\]
Thus, our solution for $F(t,x)$ becomes:

$$F(t,x)=\sum_{n=0}^\infty c_n \exp\left(\frac{\kappa n(n+1)}{2}t\right) P_n(x).$$

Using the Legendre polynomial orthogonality and substituting in the terminal conditions, we derive the coefficient expression:

$$c_n\exp\left(\frac{\kappa n(n+1)T}{2}\right)=\frac{2n+1}{2}\int_{-1}^1h\left(\frac12(x+1)\right)P_n(x)dx.$$

Assuming $h$ is a polynomial of degree 30, we can express it under the Legendre basis yielding an easy calculation for the inner product instead of using numerical integration. From the result of fitting, $h$ is indeed a monotone increasing function, see Figure \ref{fig:degree30}. This gives us the final expression of $F(t,x)$ as:

$$F(t,x)=\sum_{n=0}^{30} \frac{2n+1}{2}\langle \tilde{h},P_n \rangle \exp\left(-\frac12\kappa n(n+1)(T-t)\right)P_n(x),$$

where $\tilde{h}(x)= h(\frac{1}{2}(x+1))$ and $\langle f,g\rangle=\int_{-1}^1f(x)g(x)dx.$

Similarly, considering a European call option on VIX with strike price $K$, we set $\tilde{h}_1=(\tilde{h}-K)^+$, and obtain the expression for the option pricing as:
\begin{equation}
c(t,x)=e^{-r(T-t)}\sum_{n=0}^\infty \frac{2n+1}{2}\langle\tilde{h}_1,P_n\rangle
\exp\left(-\frac12\kappa n(n+1)(T-t)\right)P_n(x).
\end{equation}

\subsection{The estimation of calculation error}
The table below displays the prices of VIX call options based on the empirical model, given various initial values of the VIX index. We employ a fixed polynomial function $h$ of degree 30 to calculate the corresponding value of $X_t$ from the observed value of ${\rm VIX}_t$. To determine the appropriate number of terms for the polynomial, we compare the results obtained by summing 6, 11, 21, and 31 terms, and find that using 31 terms yields accurate prices.

\begin{table}[h]
  \centering
  \caption{The VIX call option price numerical error estimation, $K=0.2$, $r=0.05$, $T=1/6$, $t=1/12$. The parameter $\kappa=2.362$.}
  \label{tab:vix-call-errors}
  \begin{tabular}{lcccc}
  \toprule
  \textbf{${\rm VIX}_t$} & \textbf{6 terms} & \textbf{11 terms} & \textbf{21 terms} & \textbf{31 terms} \\
  \midrule
  0.1 & 0.0010 & 0.0024 & 0.0024 & 0.0024 \\
  0.3 & 0.1004 & 0.0991 & 0.0991 & 0.0991 \\
  0.5 & 0.1849 & 0.1871 & 0.1871 & 0.1870 \\
  0.7 & 0.1952 & 0.1984 & 0.1984 & 0.1984 \\
  \bottomrule
  \end{tabular}
\end{table}

\section{The inverse problem for $\kappa$}
To make the model adaptable to different time horizons and fit the market option data, we need to solve an inverse problem for $\kappa$, which involves finding a positive constant $\kappa$ that minimizes the distance between the theoretical values and the practical market values. Mathematically, we can express this as 
\[
  \hat{\kappa}=\arg\min_{\kappa\in[a,b]}\sum_{i=1}^N(c(t_i,x_i;\kappa,T)-\tilde{c}_i)^2,
\]
where $[a,b]$ is a guessing interval which can be determined empirically (Based on our experience, it is more appropriate to set $a$ to 3.5 and $b$ to 4.5). Here, $N$ represents the number of continuous observations, $x_i$ is the factor process value at time $t_i$, which needs to be solved by $h(\frac12(x_i+1))={\rm VIX}_{t_i}$, and $\tilde{c}_i$ is the observed call option value at time $t_i$.

Based on the previous statement, we can write the above equation as:
\begin{equation}
  \resizebox{\columnwidth}{!}{$
    \begin{aligned}
      \hat{\kappa}
      =&\arg\min_{\kappa\in[a,b]}\sum_{i=1}^N\bigg(e^{-r(T-t_i)}\sum_{n=0}^\infty \frac{2n+1}{2}\langle\tilde{h}_1,P_n\rangle
      \exp\left(-\frac{1}{2}\kappa n(n+1)(T-t_i)\right)\times P_n(x_i)-\tilde{c}_i\bigg)^2\\
      \approx&\arg\min_{\kappa\in[a,b]}\sum_{i=1}^N\left(e^{-r(T-t_i)}\sum_{n=0}^{30}\tilde{\nu}_{i,n}\exp\left(-\frac{1}{2}n(n+1)(T-t_i)\kappa\right)-\tilde{c}_i\right)^2,
    \end{aligned}
  $}
\end{equation}
where $\tilde{\nu}_{i,n}=\frac{2n+1}{2}\langle \tilde{h}_1,P_n \rangle P_n(x_i)$, $i=1,2,\cdots,N$.

Unlike other models, our approach requires calibrating only a single parameter, $\hat{\kappa}$. This equation allows us to find the optimal value of $\kappa$ that minimizes the difference between the theoretical and practical market call option prices. It is a one-dimensional optimization problem that can be easily solved numerically. To simplify the calculation, we can use a finite sum with 30 terms instead of an infinite sum. Then, we can apply standard optimization techniques, such as gradient-based methods or derivative-free methods, to solve for the optimal value of $\kappa$. The solution for $\hat{\kappa}$ will give us the value of $\kappa$ that minimizes the distance between the theoretical value and the practical value, and hence provides the best fit for the option data.

\section{Empirical study results and comparison with 3/2 model}
In this section, we focus on comparing our Legendre model with the 3/2 model, while excluding other models such as the Bergomi model from the comparison. This choice is motivated by the fact that the 3/2 model and the Bergomi model are currently the two mainstream models for VIX options. However, the Bergomi model's calibration relies on different data inputs compared to the 3/2 model, primarily requiring the S\&P 500 implied volatility surface to construct forward variance curves. Since our Legendre model shares the same data input requirements as the 3/2 model, it is more appropriate and meaningful to conduct a direct comparison between these two models.

Let us first introduce the well-known 3/2 model. If we write the dynamics of 3/2 model as
\begin{equation}\label{3to2dynamics}
  dV_t=(\alpha V_t+\beta V_t^2)dt+k V_t^{3/2}d\tilde{W}_t,
\end{equation}
where $(\tilde{W}_t)_{t\ge0}$ is the Brownian motion under risk-neutral measure. \citet{3to2option} gave the analytic solutions for call option prices on the VIX under the 3/2-model as follows.
\begin{lemma}[Goard \& Mazur(2013)]
  The value of a call option on the {\rm VIX} with strike $K$ and expiry $T$, when the ${\rm VIX}$ prices $V$, follow the risk-neutral process (\ref{3to2dynamics}) with $\beta<0$, is given by
  \[
  \begin{aligned}
  C(V,t)
  =&\frac{2\alpha e^{-r(T-t)}}{k^2p}\exp\left(\frac{-2\alpha e^{-\alpha(T-t)}}{k^2Vp}\right)V^{-\frac{\beta}{k^2}+\frac12}\times\exp\left(\alpha(T-t)(-\frac{\beta}{k^2}+\frac12)\right)\\
  &\times\int_0^{1/K}u^{\frac12-\frac{\beta}{k^2}}\left(\frac1u-K\right)e^{-\frac{2\alpha u}{k^2p}}I_\nu\left(\frac{4\alpha \sqrt{u}e^{-\frac{\alpha(T-t)}{2}}}{k^2\sqrt{V}p}\right)du,
  \end{aligned}
\]
where $\nu=1-\frac{2\beta}{k^2},p=1-\exp(-\alpha(T-t))$ and $I_\nu(\cdot)$ is the modified Bessel function of order $\nu$.
\end{lemma}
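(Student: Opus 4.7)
The plan is to reduce the call price to a one-dimensional integral against a known transition density, using the classical fact that the reciprocal of a 3/2 process is a CIR process. First I would apply It\^o's formula to $Y_t := 1/V_t$. Using $dV_t = (\alpha V_t + \beta V_t^2)\,dt + k V_t^{3/2}\,d\tilde W_t$ and $(dV_t)^2 = k^2 V_t^3\,dt$, a short computation gives
\[
dY_t = (k^2 - \beta - \alpha Y_t)\,dt - k\sqrt{Y_t}\,d\tilde W_t,
\]
so $Y_t$ is a CIR process with mean-reversion level $(k^2-\beta)/\alpha$, speed $\alpha$, and diffusion coefficient $k$. Because $\beta<0$ implies $2(k^2-\beta) > k^2$, the Feller condition holds and $Y_t$ stays strictly positive, so $V_T = 1/Y_T$ is a.s.\ finite.

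Next I would invoke the classical closed-form transition density of the CIR process: conditional on $Y_t = 1/V$, the density of $Y_T$ at $u>0$ is
\[
f(u) \;=\; \rho\, e^{-u_0 - \rho u}\Bigl(\frac{\rho u}{u_0}\Bigr)^{\nu/2} I_\nu\bigl(2\sqrt{u_0\,\rho u}\bigr),
\]
with $p = 1-e^{-\alpha(T-t)}$, $\rho = 2\alpha/(k^2 p)$, $u_0 = \rho e^{-\alpha(T-t)}/V$, and $\nu = 2(k^2-\beta)/k^2 - 1 = 1 - 2\beta/k^2$. This density can be obtained from the noncentral $\chi^2$ representation of $2\rho Y_T$ given $Y_t$, and I would take it as a standard input from the theory of affine diffusions.

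Since $V\mapsto 1/V$ is decreasing, the event $\{V_T > K\}$ coincides with $\{0 < Y_T < 1/K\}$ and $V_T - K = 1/Y_T - K$, so the risk-neutral call price is
\[
C(V,t) \;=\; e^{-r(T-t)} \int_0^{1/K} \Bigl(\tfrac{1}{u} - K\Bigr) f(u)\, du.
\]
Pulling the $u$-independent factors $\rho\, e^{-u_0}(\rho/u_0)^{\nu/2}$ out of the integral and using $\rho/u_0 = V e^{\alpha(T-t)}$ together with $2\sqrt{u_0\rho u} = 4\alpha\sqrt{u}\,e^{-\alpha(T-t)/2}/(k^2\sqrt{V}\,p)$ would reproduce precisely the prefactor and the Bessel argument that appear in the statement.

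The main obstacle is not conceptual but algebraic: matching the exact form of the published formula requires careful bookkeeping, especially of the factor $(\rho/u_0)^{\nu/2}$, which splits into the power $V^{1/2-\beta/k^2}$ and the exponential $\exp\!\bigl(\alpha(T-t)(1/2-\beta/k^2)\bigr)$ outside the integral and the factor $u^{1/2-\beta/k^2}$ inside it. A second point to verify is that no Jacobian appears: we integrate directly against the density of $Y_T$ rather than performing a change of variables from $V_T$, so the factor $1/Y_T - K$ is taken at face value.
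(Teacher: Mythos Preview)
Your approach is correct and is in fact the standard derivation: pass to the reciprocal $Y_t=1/V_t$, recognize it as a CIR diffusion, read off the noncentral chi-squared transition density, and integrate the payoff $(1/u-K)$ over $u\in(0,1/K)$. The It\^o computation, the Feller check, and the identification of $\nu$, $\rho$, $u_0$ are all right, and the algebraic unpacking you describe does recover the prefactor and the Bessel argument in the stated formula (modulo two apparent typos in the displayed statement: the integrand should read $1/u-K$ rather than $1/u-X$, and the Bessel argument carries a spurious double minus in the exponent).

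As for comparison with the paper: there is nothing to compare. The paper does not prove this lemma; it is quoted verbatim as a result of Goard and Mazur (2013) and used only as a benchmark for the numerical section. Your sketch is essentially the argument given in that original reference, so you have supplied the proof that the present paper simply cites.
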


We analyze four recent options with a strike price of 20, expiring on 2024/10/30, 2024/11/20, 2024/12/18, and 2024/12/24. After calibration, we compute the average relative errors for both the Legendre model and the 3/2 model.

\begin{figure}[H]
  \centering
  
  \begin{minipage}[t]{0.48\columnwidth}
    \centering
    \includegraphics[width=\linewidth]{./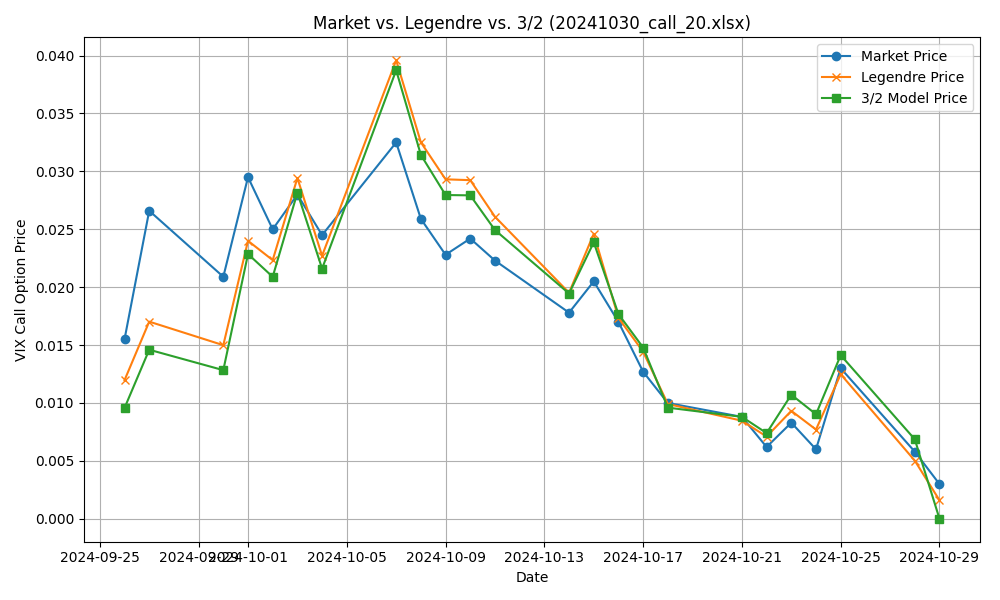}
    \caption{Market vs. Legendre vs. 3/2 (expired on 2024/10/30)}
    \label{fig:20241030expired}
  \end{minipage}
  \hfill
  \begin{minipage}[t]{0.48\columnwidth}
    \centering
    \includegraphics[width=\linewidth]{./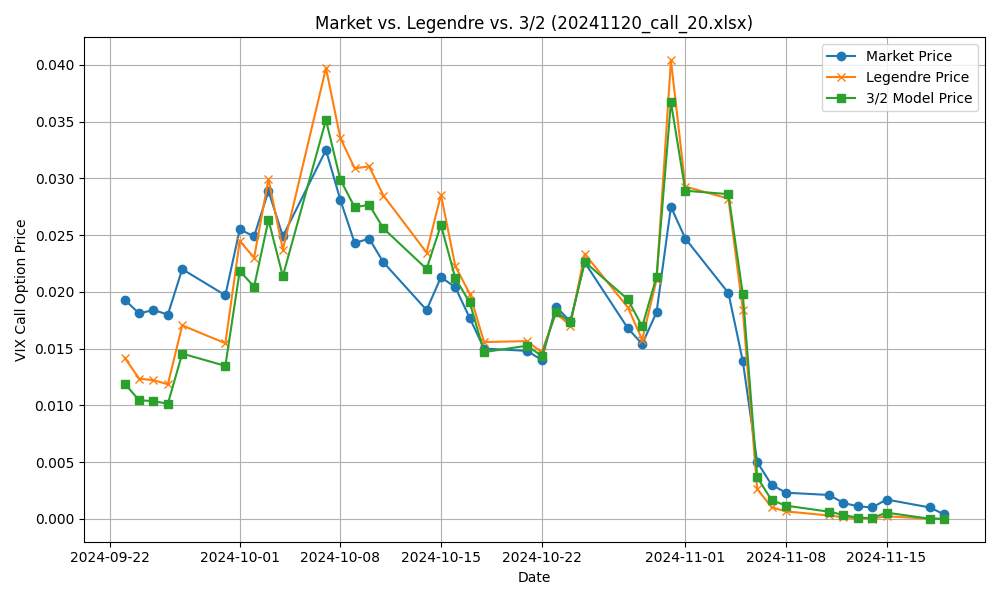}
    \caption{Market vs. Legendre vs. 3/2 (expired on 2024/11/20)}
    \label{fig:20241120expired}
  \end{minipage}
  
  \vspace{0.5cm} 
  
  \begin{minipage}[t]{0.48\columnwidth}
    \centering
    \includegraphics[width=\linewidth]{./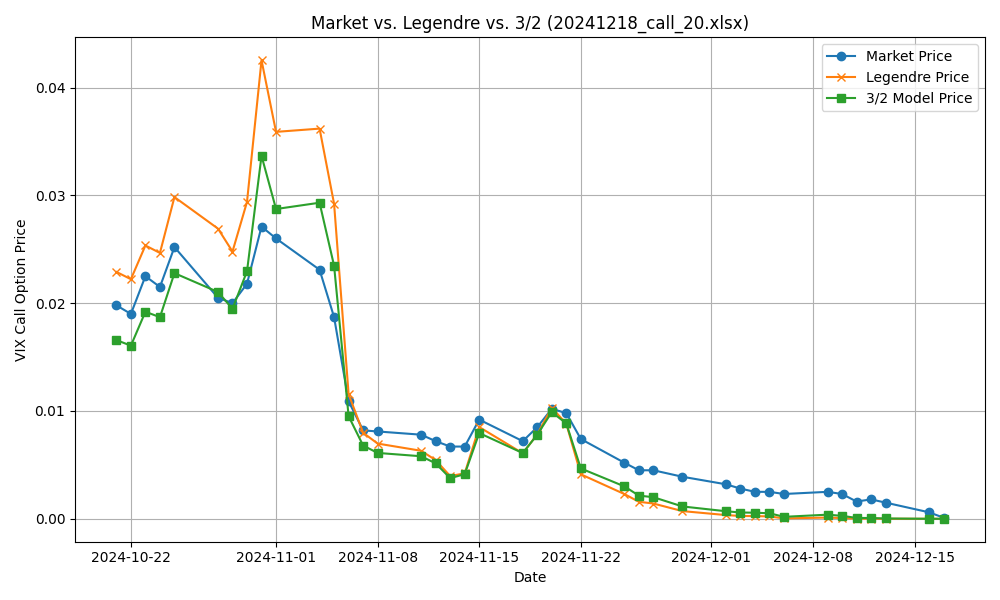}
    \caption{Market vs. Legendre vs. 3/2 (expired on 2024/12/18)}
    \label{fig:20241218expired}
  \end{minipage}
  \hfill
  \begin{minipage}[t]{0.48\columnwidth}
    \centering
    \includegraphics[width=\linewidth]{./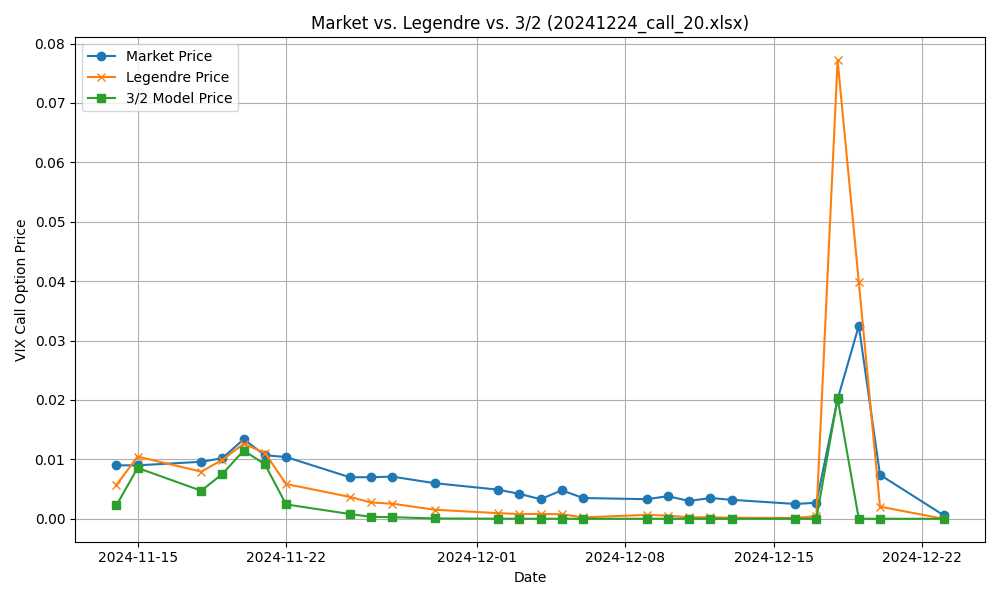}
    \caption{Market vs. Legendre vs. 3/2 (expired on 2024/12/24)}
    \label{fig:20241224expired}
  \end{minipage}
  
  \caption{Comparison of Average Relative Errors for Legendre and 3/2 Models}
  \label{fig:avg-errors}
\end{figure}

\begin{table}[h]
  \centering
  \caption{Comparison of Average Relative Errors for Legendre and 3/2 Models}
  \label{tab:avg-errors}
    \begin{tabular}{lcccc}
      \toprule
      \textbf{Model} & \textbf{2024/10/30} & \textbf{2024/11/20} & \textbf{2024/12/18} & \textbf{2024/12/24} \\
      \midrule
      Legendre                & 17.15\% & 34.02\% & 49.47\% & 69.97\% \\
      3/2 Model               & 22.49\% & 31.02\% & 42.25\% & 79.28\% \\
      \bottomrule
    \end{tabular}
\end{table}

\section{Conclusion}
In this paper, we have developed a novel single-parameter Markov diffusion model for the VIX index, leveraging Legendre polynomials to ensure a uniform invariant distribution mapped to the empirical distribution. By employing the method of separation of variables, we derived closed-form solutions for pricing VIX futures and options, which demonstrate high accuracy and computational efficiency. Our numerical approach offers a practical implementation of advanced mathematical techniques in financial modeling, combining spectral methods with stochastic differential equations to solve real-world pricing problems. The computational experiments, based on historical VIX data from 1990 to 2024, demonstrate that our proposed mathematical model achieves comparable or superior performance to the more complex 3/2 model, with the significant advantage of requiring only one parameter, $\kappa$, for calibration. This simplicity enhances the model's robustness and applicability in practical settings, exemplifying how mathematical modeling and scientific computation can be effectively applied to financial systems. The results, supported by numerical error estimations and market data comparisons, confirm the model's effectiveness for VIX derivatives pricing and risk management. Future work could explore extensions to incorporate multi-factor dynamics or apply this computational framework to other financial indices and derivatives markets.

\section{Acknowledgments}
The first author Yingli Wang is supported by the Fundamental Research Funds for the Central Universities in Shanghai University of Finance and Economics CXJJ-2023-397.

\section{Disclosure of Interest}
The authors declare that there are no conflicts of interest to disclose.

\section{Data Availability Statement}
The data and code supporting the findings of this study are publicly available at \url{github.com/gagawjbytw/empirical-VIX}.

\bibliographystyle{elsarticle-num-names} 
\bibliography{references}
\section{Appendix: Pseudocode for Calibration and Error Comparison}
\begin{algorithm}[H]
  \caption{Legendre + 3/2 Model Pricing Workflow (Example)}
  \label{alg:legendre_3_2}
  {\fontsize{8}{7}\selectfont
  \begin{algorithmic}[1]
    \STATE \textbf{Step 1: Read and fit historical VIX data (Legendre part)}
    \STATE \quad 1. Read Excel file and obtain \(\texttt{vix\_data}\).
    \STATE \quad 2. Sort data as \(\texttt{vix\_sorted}\) of length \(N_{\mathrm{hist}}\).
    \STATE \quad 3. Define \(\texttt{cdf\_values} = \text{linspace}(1/N_{\mathrm{hist}},1,N_{\mathrm{hist}})\) and build an empirical CDF interpolation \(\texttt{h\_interp}\).
    \STATE \quad 4. Let \(\texttt{deg} = 30\). Fit a polynomial \(\texttt{h\_poly}\) such that
    \STATE \quad \qquad \(\texttt{coeffs} \leftarrow \text{polyfit}(\texttt{u\_train},\, \texttt{h\_train},\, \texttt{deg})\).
    \STATE \quad 5. Define \(\widetilde{h}(x) := h\bigl((x+1)/2\bigr)\).
    
    \vspace{0.5em}
    \STATE \quad \textbf{Compute and store Legendre inner products:}
    \STATE \quad 6. Obtain Legendre-Gauss nodes \((x_g,w_g) \leftarrow \texttt{leggauss}(200)\).
    \STATE \quad 7. For \(n = 0 \ldots \texttt{deg}\):
    \STATE \quad \qquad (a) Evaluate \(P_n(x_g)\).
    \STATE \quad \qquad (b) Evaluate \(\int \widetilde{h}_1(x_g,K)\,P_n(x_g)\,w_g\) and store it.
    
    \vspace{1em}
    \STATE \textbf{Step 2: 3/2 model pricing function + calibration}
    \STATE \quad \textbf{Function} \(\text{price\_vix\_call\_3\_2}(V,K,T,\alpha,\beta,k,r)\):
    \STATE \quad \qquad 1. If \(T \le 10^{-10}\), return \(\max(V - K,\,0)\).
    \STATE \quad \qquad 2. Compute \(p = 1 - e^{-\alpha T}\) and \(\nu = 1 - 2\beta/k^2\).
    \STATE \quad \qquad 3. Compute factor\_out:
    \STATE \quad \qquad \quad \(\displaystyle \frac{2\alpha e^{-rT}}{k^2 p} \exp\Bigl(-\frac{2\alpha e^{-\alpha T}}{k^2 V p}\Bigr)
                   \times V^{-\beta/k^2 + 1/2}
                   \times \exp\Bigl(\alpha T \bigl(-\tfrac{\beta}{k^2}+\tfrac12\bigr)\Bigr).\)
    \STATE \quad \qquad 4. Numerically integrate from \(0\) to \(1/K\).  
    \STATE \quad \qquad 5. Return \(\max(\text{call\_price},\,0)\).
    
    \vspace{1em}
    \STATE \quad \textbf{Calibrate} \(\alpha, \beta, k\):
    \STATE \quad 1. For each data point in \(\texttt{df\_}\):
    \STATE \quad \qquad (a) Evaluate 3/2 model price \(\text{price\_vix\_call\_3\_2}\).
    \STATE \quad \qquad (b) Accumulate squared error between model and market price.
    \STATE \quad 2. Use \(\texttt{minimize}\) to find \(\hat\alpha, \hat\beta, \hat k\) that minimize SSE.
    
    \vspace{1em}
    \STATE \textbf{Step 3: Read option data and train the Legendre + 3/2 models}
    \STATE \quad 1. Read Excel file and obtain historical VIX call option data $\text{VIX}_{t_i}$ and corresponding factor value $x_i$.
    \STATE \quad 2. Fix strike \(K_{\mathrm{example}}=0.20\).
    \STATE \quad 3. \textbf{Legendre calibration}: 
    \STATE \quad \qquad (a) Compute \(\langle \widetilde{h}_1,\,P_n\rangle\) for all \(n\).
    \STATE \quad \qquad (b) Set up \(\texttt{Pn\_matrix}\).  
    \STATE \quad \qquad (c) Define the objective function w.r.t. \(\kappa\) using exponent \(\exp[-\frac12\,\kappa\,n(n+1)\,(T-t)]\).
    \STATE \quad \qquad (d) Optimize to find \(\hat{\kappa}\).
    \STATE \quad 4. \textbf{3/2 model calibration}:
    \STATE \quad \qquad (a) \(\hat\alpha, \hat\beta, \hat k = \text{calibrate\_3\_2\_model}(\texttt{df},\,K_{\mathrm{example}})\).
    
    \vspace{1em}
    \STATE \textbf{Step 4: Compare Legendre vs. 3/2 for VIX data}
    \STATE \quad 1. \textbf{Legendre pricing} for each \(\texttt{df}\) row:
    \STATE \quad \qquad Evaluate \(\sum_{n} \bigl(\tfrac{2n+1}{2}\langle\widetilde{h}_1,P_n\rangle\,P_n(x_i)\bigr) \exp[-\frac12\,\hat{\kappa}\,n(n+1)\,(T-t)]\).
    \STATE \quad \qquad Multiply by \(\exp(-rT)\).
    \STATE \quad 2. \textbf{3/2 pricing} for each \(\texttt{df}\) row:
    \STATE \quad \qquad Evaluate \(\text{price\_vix\_call\_3\_2}(V,K,T,\hat\alpha,\hat\beta,\hat k,r)\).
    \STATE \quad 3. Compute absolute/relative errors and optionally plot results.
  \end{algorithmic}
  }
\end{algorithm}

\end{document}